%% file: main.tex
\documentclass[pdflatex,sn-mathphys-num]{sn-jnl}

\usepackage{graphicx}%
\usepackage{amsmath,amssymb,amsfonts}%
\usepackage{amsthm}%
\usepackage{mathtools}%
\usepackage{bm}%
\usepackage{array}%
\usepackage{xurl}%

\theoremstyle{thmstyleone}%
\newtheorem{theorem}{Theorem}[section]%
\newtheorem{lemma}[theorem]{Lemma}%

\theoremstyle{thmstyletwo}%
\newtheorem{remark}{Remark}%

\theoremstyle{thmstylethree}%

\newcommand{\proofstep}[1]{%
  \par\medskip\noindent\textbf{\textit{#1}}\par\nobreak\smallskip\noindent}
\newcommand{\backmatterhead}[1]{%
  \par\medskip\noindent\textbf{#1.}\enspace\ignorespaces}

\begin{document}

\title[Statistical App-Review Sentiment Index]{Statistical Foundations for a Google Play User-Review Sentiment Index: Signal Fusion, Shrinkage, Distributional Validation, and Dynamic Smoothing}

\author*[1]{\fnm{Marco} \sur{Mandap}, \sfx{Ph.D.}}\email{marco.mandap@bulsu.edu.ph}

\affil*[1]{\orgname{Bulacan State University}, \orgaddress{\street{McArthur Highway, Capitol Compound, Guinhawa}, \city{Malolos City}, \postcode{3000}, \state{Bulacan}, \country{Philippines}}}

\abstract{We develop a statistically explicit sentiment index for Google Play
user reviews and establish the mathematical results supporting its construction.
Normalized star ratings and text-sentiment scores are treated as noisy measures
of latent review valence and fused by covariance-aware inverse-variance
weighting. Review-level estimates are aggregated with bounded helpfulness and
recency weights, then shrunk toward a population mean using estimated precision
rather than an arbitrary review-count threshold. App-level rating histograms
provide a distributional diagnostic for samples returned under different API
sort orders; because star ratings are discrete, classical continuous
Kolmogorov--Smirnov critical values are not used. A local-level state-space model
and the Kalman filter provide a denoised temporal trend. Full proofs cover the
BLUE and Gaussian maximum-likelihood result, Gaussian-conjugate shrinkage, the
Glivenko--Cantelli and Donsker theorems, count transformations via the delta
method, and exact Gaussian Kalman filtering. A worked three-review example shows
how textual complaints can materially reduce an apparently perfect star-only
score.}

\keywords{sentiment index, online reviews, Google Play, inverse-variance weighting, empirical Bayes shrinkage, distributional validation, Kalman filter}

\pacs[MSC Classification]{62F10, 62F15, 60F05, 60F17, 62M20, 93E11}

\maketitle

\section{Introduction}\label{sec:intro}

Online app reviews provide several imperfect signals of user experience: a
one-to-five-star rating, free-form text, helpfulness counts, timestamps, app
versions, and developer replies. A useful index must combine these signals
without treating popularity as certainty, must remain stable in low-volume
windows, and must distinguish temporal change from sampling artifacts. Sentiment
analysis supplies review-level text scores \cite{pang2008opinion,liu2012sentiment,
hutto2014vader}; the remaining problem is statistical aggregation.

This manuscript develops that aggregation framework and presents five
self-contained supporting proofs: (i) inverse-variance weighting for combining
star and text signals; (ii) Gaussian-conjugate shrinkage for low-information
windows; (iii) uniform laws and weak convergence for distributional reasoning;
(iv) the delta method for transformed count weights; and (v) Kalman filtering
for a latent sentiment trend. Each proof makes its assumptions explicit. The
final section maps the results to fields exposed by the
\texttt{google-play-scraper} Python package, corrects common misuses of
helpfulness counts and the Kolmogorov--Smirnov statistic, and gives a worked
interpretation. Standard mathematical background is provided by
\cite{aitken1936,stein1956,james1961,billingsley1968,billingsley1999,
vandervaart1998,shorackwellner1986,dudley2014,kalman1960}.

\section{Inverse-variance weighting is BLUE, and coincides with the Gaussian MLE}\label{sec:blue}

\proofstep{Setup.}
Let $\theta \in \mathbb{R}$ be a fixed unknown parameter, and let
\begin{equation}
X_1 = \theta + \varepsilon_1, \qquad X_2 = \theta + \varepsilon_2
\end{equation}
be two unbiased estimators satisfying $\mathbb{E}[\varepsilon_i] = 0$,
$\operatorname{Var}(\varepsilon_i) = \sigma_i^2$, and
$\varepsilon_1 \perp \varepsilon_2$. Consider the class of linear combinations:
\begin{equation}
\hat{\theta}(w) = wX_1 + (1-w)X_2, \qquad w \in \mathbb{R}.
\end{equation}

\proofstep{Step 1: Unbiasedness is automatic.}
\begin{equation}
\mathbb{E}[\hat{\theta}(w)] = w\theta + (1-w)\theta = \theta, \quad \forall w \in \mathbb{R}.
\end{equation}
Thus, the entire parameterized family is unbiased; minimizing variance is the only
remaining objective.

\proofstep{Step 2: Minimize variance.}
By independence of $\varepsilon_1$ and $\varepsilon_2$,
\begin{equation}
V(w) \coloneqq \operatorname{Var}(\hat{\theta}(w)) = w^2\sigma_1^2 + (1-w)^2\sigma_2^2.
\end{equation}
This is a quadratic function in $w$ with positive leading coefficient
$\sigma_1^2 + \sigma_2^2 > 0$ (strictly convex). Hence, it admits a unique global
minimum where $V'(w) = 0$:
\begin{equation}
\begin{aligned}
V'(w) &= 2w\sigma_1^2 - 2(1-w)\sigma_2^2 = 0,\\
w^*(\sigma_1^2 + \sigma_2^2) &= \sigma_2^2,\\
w^* &= \frac{\sigma_2^2}{\sigma_1^2 + \sigma_2^2}
= \frac{\sigma_1^{-2}}{\sigma_1^{-2} + \sigma_2^{-2}}.
\end{aligned}
\end{equation}

\proofstep{Step 3: Resulting minimum variance.}
Substituting $w^*$ back, with $a = \sigma_1^{-2}$ and $b = \sigma_2^{-2}$:
\begin{equation}
V(w^*) = \left(\frac{a}{a+b}\right)^2 \frac{1}{a}
+ \left(\frac{b}{a+b}\right)^2 \frac{1}{b}
= \frac{a}{(a+b)^2} + \frac{b}{(a+b)^2}
= \frac{1}{a+b} = \frac{1}{\sigma_1^{-2} + \sigma_2^{-2}}.
\end{equation}
Consequently,
\begin{equation}
\hat{\theta}^* = w^* X_1 + (1-w^*)X_2
= \frac{\sigma_1^{-2}X_1 + \sigma_2^{-2}X_2}{\sigma_1^{-2} + \sigma_2^{-2}}
\end{equation}
is the unique variance-minimizing member of the linear unbiased class, i.e., the
best linear unbiased estimator (BLUE). This is the two-source special case of Aitken's generalized least squares theorem.

\proofstep{Equivalence to the Gaussian MLE and UMVUE optimality.}
Suppose also that $\varepsilon_i \sim \mathcal{N}(0, \sigma_i^2)$.
The log-likelihood function is
\begin{equation}
\ell(\theta) = -\frac{(x_1 - \theta)^2}{2\sigma_1^2}
- \frac{(x_2 - \theta)^2}{2\sigma_2^2} + \text{const}.
\end{equation}
Setting $\ell'(\theta) = 0$:
\begin{equation}
\frac{x_1 - \theta}{\sigma_1^2} + \frac{x_2 - \theta}{\sigma_2^2} = 0
\implies \hat{\theta}_{\mathrm{MLE}}
= \frac{x_1\sigma_1^{-2} + x_2\sigma_2^{-2}}{\sigma_1^{-2} + \sigma_2^{-2}},
\end{equation}
which is identical to $\hat{\theta}^*$. Evaluating the second derivative gives
$\ell''(\theta) = -(\sigma_1^{-2} + \sigma_2^{-2}) < 0$, confirming a global maximum.

Fisher information is additive over independent observations:
\begin{equation}
I(\theta) = \sigma_1^{-2} + \sigma_2^{-2}.
\end{equation}
Since $\operatorname{Var}(\hat{\theta}^*) = 1/I(\theta)$, the Cram\'er--Rao lower
bound is achieved with exact equality. Therefore, $\hat{\theta}^*$ is BLUE and
uniformly minimum-variance unbiased (UMVUE) among \emph{all} unbiased estimators
(linear or non-linear) in the Gaussian setting.

\section{The shrinkage index is the Bayes posterior mean}\label{sec:shrinkage}

\proofstep{Setup.}
Let $I \mid \mu \sim \mathcal{N}(\mu, 1/V)$ denote the raw weighted index, treated
as unbiased for the latent trend $\mu$ with precision $V$. Consider the prior
$\mu \sim \mathcal{N}(C, 1/m)$.

\proofstep{Step 1: Posterior derivation via completing the square.}
The posterior density satisfies:
\begin{equation}
p(\mu \mid I) \propto \exp\!\left[-\tfrac{1}{2} V(I - \mu)^2\right]
\exp\!\left[-\tfrac{1}{2} m(\mu - C)^2\right].
\end{equation}
Expanding the exponent:
\begin{equation}
-\tfrac{1}{2} \Big[(V+m)\mu^2 - 2(VI + mC)\mu + (VI^2 + mC^2)\Big].
\end{equation}
Completing the square in $\mu$:
\begin{equation}
(V+m)\mu^2 - 2(VI + mC)\mu
= (V+m)\left[\mu - \frac{VI + mC}{V+m}\right]^2 - \frac{(VI + mC)^2}{V+m}.
\end{equation}
Because the remainder does not depend on $\mu$, it is absorbed into the
normalization constant. Hence:
\begin{equation}
\mu \mid I \sim \mathcal{N}\!\left(\frac{VI + mC}{V+m}, \, \frac{1}{V+m}\right)
= \mathcal{N}\big(I_{\mathrm{shrunk}}, \, 1/(V+m)\big).
\end{equation}
Precisions add directly ($V+m$), reflecting Gaussian-conjugate information pooling.

\proofstep{Step 2: Posterior mean as Bayes estimator under quadratic loss.}
For any posterior $p(\mu \mid I)$ and any arbitrary candidate constant
$c \in \mathbb{R}$:
\begin{equation}
\mathbb{E}[(\mu - c)^2 \mid I]
= \mathbb{E}\big[(\mu - \mathbb{E}[\mu \mid I])^2 \mid I\big]
+ (\mathbb{E}[\mu \mid I] - c)^2,
\end{equation}
since the cross-product term
\begin{equation}
2\,\mathbb{E}\big[(\mu - \mathbb{E}[\mu \mid I])
(\mathbb{E}[\mu \mid I] - c) \mid I\big]
= 2(\mathbb{E}[\mu \mid I] - c)\,\mathbb{E}[\mu - \mathbb{E}[\mu \mid I] \mid I] = 0.
\end{equation}
The squared deviation is uniquely minimized at $c = \mathbb{E}[\mu \mid I]$.
Hence $I_{\mathrm{shrunk}} = \mathbb{E}[\mu \mid I]$ is the optimal Bayes estimator.

\proofstep{Step 3: Exact frequentist risk-dominance condition.}
Let $a \coloneqq V/(V+m)$, so that $I_{\mathrm{shrunk}} = aI + (1-a)C$.
Evaluating under the true fixed parameter $\mu$:
\begin{equation}
\mathrm{Bias} = \mathbb{E}[I_{\mathrm{shrunk}}] - \mu
= a\mu + (1-a)C - \mu = -(1-a)(\mu - C),
\end{equation}
\begin{equation}
\operatorname{Var}(I_{\mathrm{shrunk}}) = \frac{a^2}{V}.
\end{equation}
The mean squared errors are:
\begin{equation}
\mathrm{MSE}(I_{\mathrm{shrunk}}) = (1-a)^2(\mu - C)^2 + \frac{a^2}{V},
\qquad \mathrm{MSE}(I) = \frac{1}{V}.
\end{equation}
Demanding $\mathrm{MSE}(I_{\mathrm{shrunk}}) < \mathrm{MSE}(I)$:
\begin{equation}
(1-a)^2(\mu - C)^2 < \frac{1 - a^2}{V} = \frac{(1-a)(1+a)}{V}
\implies (\mu - C)^2 < \frac{1+a}{V(1-a)}.
\end{equation}
Substituting $a = V/(V+m)$, where $1-a = m/(V+m)$ and $1+a = (2V+m)/(V+m)$:
\begin{equation}
\boxed{(\mu - C)^2 < \frac{2V + m}{Vm}}
\end{equation}
The boxed inequality is the exact boundary for MSE dominance. James--Stein
shrinkage attains uniform risk dominance over the MLE for $p \ge 3$; in one
dimension, no shrinkage estimator dominates uniformly over $\mu \in \mathbb{R}$.
Risk degrades relative to the raw index whenever $\mu$ lies outside this critical
radius, so the 1-D shrinkage estimator is Bayes-optimal under the specified prior
but not uniformly dominant in a frequentist sense.

\section{Glivenko--Cantelli theorem and Donsker's theorem}\label{sec:empirics}

\subsection{Glivenko--Cantelli theorem}

\begin{theorem}[Glivenko--Cantelli]\label{thm:gc}
Let $X_1, \dots, X_n \stackrel{\mathrm{iid}}{\sim} F$. The empirical distribution
function $\hat{F}_n(x) \coloneqq \frac{1}{n} \sum_{i=1}^n \mathbf{1}\{X_i \le x\}$
satisfies:
\begin{equation}
D_n \coloneqq \sup_{x \in \mathbb{R}} |\hat{F}_n(x) - F(x)|
\xrightarrow{\mathrm{a.s.}} 0 \quad \text{as } n \to \infty.
\end{equation}
\end{theorem}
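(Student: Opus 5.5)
The approach is the classical bracketing argument: reduce the supremum over $x\in\mathbb{R}$ to a maximum over finitely many grid points, and control those points with the strong law of large numbers.

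\emph{Step 1: pointwise convergence.} Fix $x\in\mathbb{R}$. The indicators $\mathbf{1}\{X_i\le x\}$ are iid Bernoulli with mean $F(x)$, so the strong law gives $\hat F_n(x)\to F(x)$ almost surely. The same argument applied to $\mathbf{1}\{X_i< x\}$ gives $\hat F_n(x^-)\to F(x^-)$ almost surely, where $F(x^-)=\mathbb{P}(X_1<x)$. Left limits must be handled explicitly, because $F$ may have atoms; this is exactly the case of the discrete star ratings used later.

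\emph{Step 2: a finite grid.} Fix an integer $k\ge1$ and set $x_j\coloneqq \inf\{x: F(x)\ge j/k\}$ for $j=1,\dots,k-1$, with $x_0=-\infty$ and $x_k=+\infty$. By right-continuity of $F$, consecutive grid points satisfy $F(x_{j}^-)-F(x_{j-1})\le 1/k$. For $x\in[x_{j-1},x_j)$, monotonicity of both $\hat F_n$ and $F$ gives
\begin{equation}
\hat F_n(x)-F(x)\le \hat F_n(x_j^-)-F(x_{j-1})\le \hat F_n(x_j^-)-F(x_j^-)+\tfrac1k,
\end{equation}
and symmetrically
\begin{equation}
\hat F_n(x)-F(x)\ge \hat F_n(x_{j-1})-F(x_{j-1})-\tfrac1k.
\end{equation}
Hence
\begin{equation}
D_n\le \max_{0\le j\le k}\max\bigl(|\hat F_n(x_j)-F(x_j)|,\,|\hat F_n(x_j^-)-F(x_j^-)|\bigr)+\tfrac1k.
\end{equation}

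\emph{Step 3: conclusion.} The right-hand maximum involves only finitely many events, each of which converges almost surely by Step 1. Therefore $\limsup_n D_n\le 1/k$ on a probability-one event $\Omega_k$. Intersecting over $k\in\mathbb{N}$ is a countable intersection, so it still has probability one, and on it $D_n\to0$. The measurability of $D_n$ should also be noted: the supremum may be taken over rational $x$ together with the countably many atoms of $F$, or it is simply dominated by the grid bound above.

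The main obstacle is Step 2. The grid quantiles must be chosen so that the gap $F(x_j^-)-F(x_{j-1})\le 1/k$ holds even when $F$ has jumps, and the sandwich inequalities must be verified carefully at the endpoints $x_0=-\infty$ and $x_k=+\infty$, where $F$ takes the values $0$ and $1$ trivially. Everything else is a direct application of the strong law.
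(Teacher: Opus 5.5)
Your proposal is correct and follows the paper's proof essentially step for step. The shared steps are the pointwise SLLN for $\hat F_n(x)$ and $\hat F_n(x^-)$, the quantile grid $x_j=\inf\{x:F(x)\ge j/k\}$ with $F(x_j^-)-F(x_{j-1})\le 1/k$, the monotone sandwich on $[x_{j-1},x_j)$, and $\limsup_n D_n\le 1/k$ almost surely. Your explicit intersection of the events $\Omega_k$ over $k\in\mathbb{N}$, and your measurability remark, only make precise what the paper's ``$k$ was arbitrary'' leaves implicit; for measurability, right-continuity of $\hat F_n-F$ already lets you restrict the supremum to rational $x$.
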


\begin{proof}
\textit{Step 1 (Pointwise SLLN).} Fix $x \in \mathbb{R}$. The random variables
$\mathbf{1}\{X_i \le x\}$ are i.i.d.\ $\mathrm{Bernoulli}(F(x))$. By the strong law
of large numbers (SLLN), $\hat{F}_n(x) \xrightarrow{\text{a.s.}} F(x)$. By an
identical argument on left limits, $\hat{F}_n(x^-) \xrightarrow{\text{a.s.}} F(x^-)$.

\textit{Step 2 (Discretization onto a finite quantile grid).} Fix $k \in \mathbb{N}^+$.
Define the grid:
\begin{equation}
x_j \coloneqq \inf\{x \in \mathbb{R} : F(x) \ge j/k\}, \qquad j = 1, \dots, k-1,
\end{equation}
with boundary conditions $x_0 = -\infty$ and $x_k = +\infty$. By definition of the
generalized inverse, $F(x_j^-) \le j/k \le F(x_j)$. On any partition interval
$[x_{j-1}, x_j)$, the measure satisfies:
\begin{equation}
F(x_j^-) - F(x_{j-1}) \le \frac{j}{k} - \frac{j-1}{k} = \frac{1}{k}.
\end{equation}

\textit{Step 3 (Sandwich argument via monotonicity).} For any $x \in [x_{j-1}, x_j)$,
monotonicity of $\hat{F}_n$ and $F$ yields:
\begin{equation}
\begin{aligned}
\hat{F}_n(x) - F(x)
&\le \hat{F}_n(x_j^-) - F(x_{j-1})\\
&= \big[\hat{F}_n(x_j^-) - F(x_j^-)\big]
 + \big[F(x_j^-) - F(x_{j-1})\big]\\
&\le \big[\hat{F}_n(x_j^-) - F(x_j^-)\big] + \frac{1}{k}.
\end{aligned}
\end{equation}
Analogously, from the lower bound:
\begin{equation}
\hat{F}_n(x) - F(x) \ge \hat{F}_n(x_{j-1}) - F(x_j^-)
\ge \big[\hat{F}_n(x_{j-1}) - F(x_{j-1})\big] - \frac{1}{k}.
\end{equation}
Taking the absolute supremum over all $x \in \mathbb{R}$:
\begin{equation}
\sup_{x \in \mathbb{R}} |\hat{F}_n(x) - F(x)|
\le \max_{0 \le j \le k} |\hat{F}_n(x_j) - F(x_j)|
+ \max_{0 \le j \le k} |\hat{F}_n(x_j^-) - F(x_j^-)| + \frac{1}{k}.
\end{equation}

\textit{Step 4 (Almost sure convergence).} The bound involves a maximum over
$2(k+1)$ terms, each converging to zero almost surely by Step~1. Because a finite
union of null sets has probability zero:
\begin{equation}
\limsup_{n \to \infty} \sup_{x \in \mathbb{R}} |\hat{F}_n(x) - F(x)|
\le \frac{1}{k} \quad \text{a.s.}
\end{equation}
Since $k \in \mathbb{N}^+$ was arbitrary, taking $k \to \infty$ establishes
$D_n \xrightarrow{\mathrm{a.s.}} 0$.
\end{proof}

\subsection{Donsker's theorem}

\begin{theorem}[Donsker]\label{thm:donsker}
Let $X_1, X_2, \dots$ be i.i.d.\ with a continuous distribution function $F$,
and let $F^{-1}(t) \coloneqq \inf\{x : F(x) \ge t\}$ denote the quantile
function (with the usual conventions at $t = 0, 1$). The empirical process
\begin{equation}
\mathbb{G}_n(t) \coloneqq \sqrt{n}\big(\hat{F}_n(F^{-1}(t)) - t\big),
\qquad t \in [0,1],
\end{equation}
converges weakly in $D[0,1]$ equipped with the Skorokhod topology to a standard
Brownian bridge $B$:
\begin{equation}
\mathbb{G}_n \Rightarrow B.
\end{equation}
\end{theorem}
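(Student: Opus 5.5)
The plan is the classical two-part route to weak convergence in $D[0,1]$: convergence of finite-dimensional distributions, then tightness (Billingsley's criterion), after first reducing to the uniform case.

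\textbf{Reduction to uniform variables.} Since $F$ is continuous, $U_i \coloneqq F(X_i)$ are i.i.d.\ $\mathrm{Uniform}(0,1)$. Moreover $\{X_i \le F^{-1}(t)\} = \{U_i \le t\}$ almost surely: for continuous $F$ we have $F(F^{-1}(t)) = t$, and $X_i \le F^{-1}(t)$ implies $F(X_i) \le t$. The converse can fail only on flat stretches of $F$, which carry probability zero. Hence $\mathbb{G}_n(t) = \sqrt{n}(\hat{G}_n(t) - t)$ almost surely, simultaneously for all $t$, where $\hat{G}_n$ is the uniform empirical distribution function. Simultaneity is justified by right-continuity and a countable dense set of $t$. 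It therefore suffices to treat the uniform empirical process.

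\textbf{Finite-dimensional convergence.} Fix $0 \le t_1 < \dots < t_k \le 1$. The vector $(\mathbf{1}\{U_i \le t_j\})_{j=1}^k$ is i.i.d.\ in $i$, with mean $(t_j)_j$ and covariance $\min(t_j,t_l) - t_j t_l$. The multivariate CLT, via the Cram\'er--Wold device, gives convergence of $(\mathbb{G}_n(t_j))_j$ to a centered Gaussian vector with exactly the Brownian bridge covariance. This step is routine.

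\textbf{Tightness (the main obstacle).} I will verify the moment criterion of Billingsley \cite{billingsley1999} (Theorem 13.5): for $t_1 \le t \le t_2$,
\begin{equation}
\mathbb{E}\big[(\mathbb{G}_n(t)-\mathbb{G}_n(t_1))^2(\mathbb{G}_n(t_2)-\mathbb{G}_n(t))^2\big] \le C (t_2 - t_1)^2 .
\end{equation}
Write the two increments as normalized sums $n^{-1/2}\sum_i \alpha_i$ and $n^{-1/2}\sum_i \beta_i$. Here $\alpha_i = \mathbf{1}\{t_1 < U_i \le t\} - p$ and $\beta_i = \mathbf{1}\{t < U_i \le t_2\} - q$, with $p = t - t_1$ and $q = t_2 - t$. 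Expanding the fourth-order product, only index patterns with every index repeated survive. This yields terms of order $pq$ together with $n^{-1}\mathbb{E}[\alpha^2\beta^2]$. Since the two intervals are disjoint, $\mathbb{E}[\alpha^2\beta^2] \le C'(p+q)\min(p,q)$ or similar, which is too weak when $t_2 - t_1 \ll 1/n$.

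The standard fix, which I will use, handles this residual term through the jump structure of $D[0,1]$. The increment product vanishes unless two distinct sample points fall in $(t_1,t_2]$, up to centering terms that are $O(\sqrt{n}(t_2-t_1))$. Alternatively, I will invoke Billingsley's variant (Theorem 13.5 with the bound by $(F(t_2)-F(t_1))^{2\alpha}$ plus a continuity condition at $1$). Either way the main bound becomes $3pq \le 3(t_2-t_1)^2$. Continuity of the limit at $t=1$ holds because $\mathbb{G}_n(1)=0$.

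\textbf{Conclusion.} Finite-dimensional convergence plus tightness yields $\mathbb{G}_n \Rightarrow B$ in the Skorokhod topology. This uses the fact that the limit has continuous paths and that the fidi laws determine the law on $D[0,1]$.

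A cleaner alternative, which I will keep in reserve, goes through the uniform quantile process or Poissonization. A second option is the Hungarian (KMT) embedding, giving $\sup_t|\mathbb{G}_n(t) - B_n(t)| \to 0$ in probability for suitable bridges $B_n$. That route is heavier machinery, though, so the Billingsley moment route is the main plan.
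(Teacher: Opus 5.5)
Your overall route is the paper's: reduce to $U_i = F(X_i)$, get finite-dimensional convergence from the multivariate CLT, and get tightness from Billingsley's two-interval product-moment condition. The gap is in the tightness paragraph, which never establishes the bound it ends with.

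You single out the one-observation term $n^{-1}\mathbb{E}[\alpha_1^2\beta_1^2]$ as an obstacle when $t_2-t_1 \ll 1/n$. You then offer two remedies, and neither works:
\begin{itemize}
\item The ``jump structure'' argument is not carried out. Its description (the product vanishes unless two points fall in $(t_1,t_2]$, up to centering) is not a usable estimate.
\item Billingsley's Theorem~13.5 with the $(F(t_2)-F(t_1))^{2\alpha}$ bound is the very moment condition you are trying to verify, so it circumvents nothing.
\end{itemize}

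In fact the obstacle does not exist. Integrate over $(t_1,t]$, $(t,t_2]$ and the complement, which have probabilities $p$, $q$, $1-p-q$. This gives exactly $\mathbb{E}[\alpha_1^2\beta_1^2] = pq(p+q-3pq) \le pq$. The surviving index patterns then give
\[
\mathbb{E}\big[\Delta_1^2\Delta_2^2\big]
= \frac{pq(p+q-3pq)}{n} + \frac{n-1}{n}\Big[p(1-p)\,q(1-q) + 2p^2q^2\Big]
\le pq \le \frac{(t_2-t_1)^2}{4}
\]
uniformly in $n$. This is exactly the paper's computation. Even your crude bound $(p+q)\min(p,q)$ is already at most $(t_2-t_1)^2/2$, so it was never ``too weak.''

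The $h/n$ difficulty you have in mind belongs to the fourth moment of a \emph{single} increment. With $h=t-s$,
\[
\mathbb{E}|\mathbb{G}_n(t)-\mathbb{G}_n(s)|^4 = 3\tfrac{n-1}{n}h^2(1-h)^2 + \tfrac{1}{n}h(1-h)(1-3h+3h^2),
\]
which is not $O(h^{1+\epsilon})$ uniformly in $n$. The two-interval product is used precisely to avoid this. A single observation cannot land in both disjoint intervals, so its contribution is of order $pq(p+q)$ rather than $h$. Replace your ``standard fix'' with this direct computation and the argument closes.

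Two smaller points:
\begin{itemize}
\item You do not need the identification $\mathbb{G}_n(t)=\sqrt{n}(\hat{G}_n(t)-t)$ simultaneously in $t$. Almost-sure equality at each fixed $t$, which is what the paper uses, suffices for both the finite-dimensional laws and the moments at fixed $t_1\le t\le t_2$. Your right-continuity justification for simultaneity is also circular. The map $t\mapsto \hat{F}_n(F^{-1}(t))$ is right-continuous only off the null event that some $X_i$ lies in a flat stretch of $F$ beyond its left endpoint, so you must argue that directly.
\item The condition at $t=1$ in Billingsley's theorem concerns the limit, $B(1)-B(1-\delta)\Rightarrow 0$. It holds by path continuity of $B$, not because $\mathbb{G}_n(1)=0$.
\end{itemize}
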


\begin{proof}
\textit{Part 1: Finite-dimensional convergence.}
Fix $0 \le t_1 < t_2 < \dots < t_k \le 1$. Because $F$ is continuous,
$F \circ F^{-1}$ is the identity on $[0,1]$
\cite[Proposition~1]{shorackwellner1986} and $U_i = F(X_i)$ is Uniform$(0,1)$
\cite[Proposition~2]{shorackwellner1986}. For each $t$, the events
$\{U_i \le t\}$ and $\{X_i \le F^{-1}(t)\}$ differ only on $\{F(X_i) = t\}$,
an event of probability zero; hence, almost surely,
\begin{equation}
\mathbb{G}_n(t_l) = \sqrt{n}\left(\frac{1}{n}\sum_{i=1}^n
\mathbf{1}\{U_i \le t_l\} - t_l\right).
\end{equation}
Define the random vector:
\begin{equation}
\bm{Y}_i = \big(\mathbf{1}\{U_i \le t_1\}, \, \dots, \,
\mathbf{1}\{U_i \le t_k\}\big)^\top \in \mathbb{R}^k.
\end{equation}
The vectors $\{\bm{Y}_i\}_{i=1}^n$ are i.i.d.\ with expectation
$\mathbb{E}[\bm{Y}_i] = (t_1, \dots, t_k)^\top$ and covariance matrix:
\begin{equation}
\Sigma_{lm} = \operatorname{Cov}(\mathbf{1}\{U \le t_l\}, \mathbf{1}\{U \le t_m\})
= \mathbb{P}(U \le \min(t_l, t_m)) - t_l t_m
= \min(t_l, t_m) - t_l t_m.
\end{equation}
By the multivariate central limit theorem (via the Cram\'er--Wold device):
\begin{equation}
\sqrt{n}\left(\frac{1}{n}\sum_{i=1}^n \bm{Y}_i - (t_1, \dots, t_k)^\top\right)
\Rightarrow \mathcal{N}(0, \Sigma).
\end{equation}
Because the $l$-th coordinate matches $\mathbb{G}_n(t_l)$, this confirms:
\begin{equation}
\big(\mathbb{G}_n(t_1), \dots, \mathbb{G}_n(t_k)\big)^\top
\Rightarrow \mathcal{N}(0, \Sigma).
\end{equation}
Since $\operatorname{Cov}(B(s), B(t)) = \min(s, t) - st$, the limiting Gaussian
vector matches the finite-dimensional distributions of the Brownian bridge.

\textit{Part 2: Tightness via Billingsley's product-moment bound.}
To upgrade finite-dimensional convergence to weak convergence in $D[0,1]$, we
verify the moment version of Billingsley's two-interval condition: there exist
$\gamma > 0$, $\alpha > 1$, and a continuous non-decreasing function $F$ on
$[0,1]$ (Billingsley's modulus function; unrelated to the distribution function
of Theorem~\ref{thm:donsker}) such that
\begin{equation}\label{eq:bill-moment}
\mathbb{E}\big[|\mathbb{G}_n(t) - \mathbb{G}_n(s)|^{\gamma} \,
|\mathbb{G}_n(u) - \mathbb{G}_n(t)|^{\gamma}\big]
\le \big(F(u) - F(s)\big)^{\alpha},
\qquad \forall\, 0 \le s \le t \le u \le 1.
\end{equation}
Set $\gamma = 2$, $\alpha = 2$, and $F(t) = t$. The increments over the disjoint
intervals $(s,t]$ and $(t,u]$ are scaled sums of centered indicators:
\begin{equation}
\Delta_1 \coloneqq \mathbb{G}_n(t) - \mathbb{G}_n(s) = \frac{1}{\sqrt{n}}\sum_{i=1}^n \xi_i,
\qquad
\Delta_2 \coloneqq \mathbb{G}_n(u) - \mathbb{G}_n(t) = \frac{1}{\sqrt{n}}\sum_{i=1}^n \eta_i,
\end{equation}
where, in the uniform coordinates of Part~1,
\begin{equation}\label{eq:xi-eta}
\xi_i = \mathbf{1}\{s < U_i \le t\} - a, \qquad
\eta_i = \mathbf{1}\{t < U_i \le u\} - b, \qquad
a \coloneqq t - s,\quad b \coloneqq u - t.
\end{equation}
The two increments are \emph{not} independent: within a single observation,
$\mathbb{E}[\xi_i \eta_i] = -ab$. We therefore expand the product moment
exactly. Using independence across observations,
$n^2\,\mathbb{E}[\Delta_1^2\Delta_2^2] = \sum_{i,j,k,l}
\mathbb{E}[\xi_i \xi_j \eta_k \eta_l]$, and every index pattern in which some
observation carries a single $\xi$- or $\eta$-factor vanishes, because
$\mathbb{E}[\xi_i] = \mathbb{E}[\eta_i] = 0$. The surviving patterns are
\begin{equation}\label{eq:patterns}
\begin{aligned}
&\text{(i) } i = j = k = l: && n \ \text{terms}, \quad
\gamma_0 \coloneqq \mathbb{E}[\xi_i^2\eta_i^2] = ab(a + b - 3ab),\\
&\text{(ii) } i = j \ne k = l: && n(n-1) \ \text{terms}, \quad
\mathbb{E}[\xi_i^2]\,\mathbb{E}[\eta_k^2] = a(1-a)\,b(1-b),\\
&\text{(iii) } \{i,k\} = \{j,l\},\ i \ne j: && 2n(n-1) \ \text{terms}, \quad
\mathbb{E}[\xi_i\eta_i]\,\mathbb{E}[\xi_j\eta_j] = a^2 b^2,
\end{aligned}
\end{equation}
where $\gamma_0$ follows by integrating $\xi^2\eta^2$ over the three regions
$(s,t]$, $(t,u]$, and the rest, with probabilities $a$, $b$, $1-a-b$.
Hence, exactly,
\begin{equation}\label{eq:exact-moment}
\mathbb{E}[\Delta_1^2\Delta_2^2]
= \frac{\gamma_0}{n} + \frac{n-1}{n}\Big[a(1-a)\,b(1-b) + 2a^2b^2\Big].
\end{equation}
Both $\gamma_0 \le ab$ (since $a + b \le 1$) and
$a(1-a)b(1-b) + 2a^2b^2 = ab\big[1 - (a+b) + 3ab\big] \le ab$
(since $3ab \le a+b$ for $a + b \le 1$); the right-hand side of
\eqref{eq:exact-moment} is a convex combination of two quantities bounded by
$ab$, so
\begin{equation}\label{eq:final-bound}
\mathbb{E}[\Delta_1^2\Delta_2^2] \le ab = (t-s)(u-t)
\le \frac{(u-s)^2}{4} \le \big(F(u) - F(s)\big)^{\alpha},
\end{equation}
by the arithmetic mean--geometric mean inequality, with $F(t) = t$ and
$\alpha = 2$. This is \eqref{eq:bill-moment} for $\gamma = 2$, $\alpha = 2$.
By Markov's inequality,
$\mathbb{P}\{|\Delta_1| > \lambda,\, |\Delta_2| > \lambda\}
\le \mathbb{E}[\Delta_1^2\Delta_2^2]/\lambda^4 \le (u-s)^2/\lambda^4$, i.e.\
exactly condition (15.21) of \cite{billingsley1968} with $\alpha = 2$ and
$\gamma = 4$. The finite-dimensional limits are the Brownian bridge, which has
continuous paths, so $T_B = [0,1]$,
$\mathbb{P}\{B(1) \ne B(1-)\} = 0$, and the tightness/modulus conclusion of
Billingsley's Theorem~15.6 follows from this bound via his maximal inequality
(Theorem~12.5 and (15.22)). The same product-moment inequality,
\eqref{eq:exact-moment}, is precisely the estimate (13.17)--(13.18) used in the
proof of Billingsley's Theorem~13.1 and reused in the proof of his
Theorem~16.4 for the $D[0,1]$-valued empirical process.

\textit{Part 3: Weak convergence and distribution-freeness.}
Finite-dimensional convergence (Part~1) and the two-interval bound (Part~2)
combine to give $\mathbb{G}_n \Rightarrow B$ in $D[0,1]$ by Billingsley's
Theorem~15.6; the bound supplies the modulus estimate (15.22), hence tightness,
and Prokhorov's theorem applies. The functional
$h(x) = \sup_{t \in [0,1]} |x(t)|$ is continuous at every continuous path in
$D[0,1]$, and $B$ has continuous paths almost surely. The continuous mapping
theorem therefore gives:
\begin{equation}
\sup_{t \in [0,1]} |\mathbb{G}_n(t)| \Rightarrow \sup_{t \in [0,1]} |B(t)|.
\end{equation}
The limiting law does not depend on $F$, which is why Kolmogorov--Smirnov
critical values apply to any continuous distribution.
\end{proof}

\begin{remark}\label{rem:donsker-continuity}
The continuity hypothesis on $F$ in Theorem~\ref{thm:donsker} is essential. For
arbitrary $F$, both steps fail: (i) $U_i = F(X_i)$ is stochastically larger than
uniform, $\mathbb{P}\{F(X) < t\} \le t$ with strict inequality whenever $t$ is
outside the closure of the range of $F$
\cite[Proposition~2, (25)]{shorackwellner1986}; and (ii)
$F(F^{-1}(t)) \ge t$ \cite[Proposition~1, (24)]{shorackwellner1986}, so that
\begin{equation}
\mathbb{E}\,\mathbb{G}_n(t) = \sqrt{n}\big(F(F^{-1}(t)) - t\big),
\end{equation}
which diverges to $+\infty$ as $n \to \infty$ at every $t \in (0,1)$ outside the
range of $F$, i.e.\ at every gap created by an atom of $F$, and
$\mathbb{G}_n(t)$ itself then diverges in probability. The correct
general-$F$ statements are the $x$-indexed process
$\sqrt{n}(\hat{F}_n(x) - F(x))$, which reduces pathwise, simultaneously for all
$n$, to the uniform case through the quantile representation
$\hat{F}_n = G_n \circ F$ \cite[Theorem~3, (20)]{shorackwellner1986} and
converges to the Brownian bridge with covariance
$\mathbb{E}[B(s)B(t)] = F(s)(1 - F(t))$, $s \le t$
\cite[Theorem~16.4]{billingsley1968}; or the coupling formulation for uniform
samples \cite[Theorem~1.7]{dudley2014}. For continuous $F$,
$F \circ F^{-1} = \mathrm{id}$ \cite[Proposition~1]{shorackwellner1986}, which
is exactly what Parts~1--2 use.
\end{remark}

\section{The delta method and analysis of \texorpdfstring{$\log(1+U)$}{log(1+U)}}\label{sec:delta}

\begin{theorem}[Univariate delta method]\label{thm:delta}
Let $\{X_n\}$ be a sequence of random variables such that
$\sqrt{n}(X_n - \theta) \Rightarrow \mathcal{N}(0, \sigma^2)$. If $g$ is
differentiable at $\theta$ with $g'(\theta) \ne 0$, then:
\begin{equation}
\sqrt{n}\big(g(X_n) - g(\theta)\big)
\Rightarrow \mathcal{N}\big(0, [g'(\theta)]^2 \sigma^2\big).
\end{equation}
\end{theorem}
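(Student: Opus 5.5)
The plan is the standard linearization argument: write $g(X_n)-g(\theta)$ as $g'(\theta)(X_n-\theta)$ plus a remainder that is $o_P(n^{-1/2})$, then conclude with Slutsky's lemma. Only differentiability at the single point $\theta$ is available, not a Taylor expansion with continuous derivative. I will therefore encode the remainder through an auxiliary function that is continuous at $\theta$:
\begin{equation}
R(x) \coloneqq \begin{cases} \dfrac{g(x)-g(\theta)}{x-\theta} - g'(\theta), & x \ne \theta,\\[4pt] 0, & x = \theta.\end{cases}
\end{equation}
By definition of the derivative, $R(x)\to 0$ as $x\to\theta$, so $R$ is continuous at $\theta$. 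Moreover, for every $x$,
\begin{equation}
g(x)-g(\theta) = \big(g'(\theta)+R(x)\big)(x-\theta).
\end{equation}

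\textbf{Step 1: consistency of $X_n$.} I first show $X_n \to \theta$ in probability. Write $X_n-\theta = n^{-1/2}\cdot\sqrt{n}(X_n-\theta)$. The second factor converges weakly, hence is $O_P(1)$, because weakly convergent sequences are tight. The first factor is deterministic and tends to $0$, so the product is $o_P(1)$; this is Slutsky's lemma for products.

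\textbf{Step 2: the remainder vanishes.} Since $R$ is continuous at $\theta$ and $X_n\to\theta$ in probability, I get $R(X_n)\to 0$ in probability. I will prove this directly from the $\varepsilon$--$\delta$ definition. Given $\varepsilon>0$, choose $\delta$ with $|R(x)|<\varepsilon$ whenever $|x-\theta|<\delta$. Then
\begin{equation}
\mathbb{P}\{|R(X_n)|\ge\varepsilon\}\le\mathbb{P}\{|X_n-\theta|\ge\delta\}\to 0.
\end{equation}
This is the one point needing care: the continuous mapping theorem applies because continuity is required only at the limit point $\theta$, and I handle it by hand rather than citing the theorem in a form that assumes global continuity.

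\textbf{Step 3: conclusion via Slutsky.} Multiplying the identity above by $\sqrt{n}$ gives
\begin{equation}
\sqrt{n}\big(g(X_n)-g(\theta)\big) = g'(\theta)\,\sqrt{n}(X_n-\theta) + R(X_n)\,\sqrt{n}(X_n-\theta).
\end{equation}
The second term is $o_P(1)\cdot O_P(1)=o_P(1)$. The first term converges weakly to $g'(\theta)Z$ with $Z\sim\mathcal{N}(0,\sigma^2)$, by the continuous mapping theorem applied to the linear map $z\mapsto g'(\theta)z$. Slutsky's lemma then gives the limit $g'(\theta)Z\sim\mathcal{N}(0,[g'(\theta)]^2\sigma^2)$, which is the claim.

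I expect no real obstacle; the only delicate point is the pointwise-differentiability remainder in Step 2. I would also remark that the hypothesis $g'(\theta)\ne0$ is not used in the argument. If $g'(\theta)=0$, the proof still holds and yields a degenerate limit at $0$; the hypothesis serves only to make the Gaussian limit nondegenerate, which matters for the subsequent analysis of $\log(1+U)$.
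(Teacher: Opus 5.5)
Your proof is correct and follows essentially the paper's route: linearize $g$ at $\theta$, show the remainder factor tends to $0$ in probability because $X_n \to \theta$ in probability, and conclude with Slutsky's lemma. Your version is slightly more careful, since setting $R(\theta)=0$ in the difference-quotient form handles the event $\{X_n=\theta\}$, on which the paper's quotient $R(X_n)/(X_n-\theta)$ is undefined; you also correctly observe that the hypothesis $g'(\theta)\ne 0$ is never used.
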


\begin{proof}
Taylor expanding $g$ around $\theta$:
\begin{equation}
g(x) = g(\theta) + g'(\theta)(x - \theta) + R(x),
\quad \text{where } \lim_{x \to \theta} \frac{R(x)}{x - \theta} = 0.
\end{equation}
Substituting $X_n$:
\begin{equation}
\sqrt{n}\big(g(X_n) - g(\theta)\big)
= g'(\theta)\sqrt{n}(X_n - \theta)
+ \underbrace{\sqrt{n}(X_n - \theta)}_{O_p(1)}
\cdot \underbrace{\frac{R(X_n)}{X_n - \theta}}_{o_p(1)}.
\end{equation}
Since $X_n - \theta = O_p(n^{-1/2}) \xrightarrow{p} 0$, continuity ensures
$R(X_n)/(X_n - \theta) \xrightarrow{p} 0$. By Slutsky's theorem, the remainder is
$o_p(1)$, concluding:
\begin{equation}
\sqrt{n}\big(g(X_n) - g(\theta)\big)
= g'(\theta)\sqrt{n}(X_n - \theta) + o_p(1)
\Rightarrow \mathcal{N}\big(0, [g'(\theta)]^2 \sigma^2\big).
\end{equation}
\end{proof}

\proofstep{Application to Poisson counts.}
For $U \sim \operatorname{Poisson}(\lambda)$ where $\operatorname{Var}(U) = \lambda$,
the first-order Taylor propagation of error yields:
\begin{equation}
\operatorname{Var}(g(U)) \approx [g'(\lambda)]^2 \operatorname{Var}(U)
= [g'(\lambda)]^2 \lambda.
\end{equation}
Setting this equal to a constant $c$:
\begin{equation}
g'(\lambda) = \sqrt{c}\,\lambda^{-1/2} \implies g(\lambda) = 2\sqrt{c\lambda} + C_0.
\end{equation}
Thus, $g(\lambda) \propto \sqrt{\lambda}$ is the unique (up to affine
transformation) variance-stabilizing transformation.

\proofstep{Asymptotic variance of $\log(1+U)$.}
For $g(\lambda) = \log(1 + \lambda)$, the first derivative is
$g'(\lambda) = \frac{1}{1 + \lambda}$, yielding asymptotic variance:
\begin{equation}
[g'(\lambda)]^2 \lambda = \frac{\lambda}{(1 + \lambda)^2}
\xrightarrow{\lambda \to \infty} 0.
\end{equation}
This variance decays to zero as $\lambda$ grows, so $\log(1+U)$ is an
\emph{over-stabilizing concave dampener}.

While the variance stabilizer for Poisson counts is the Anscombe transform
$\sqrt{U + 3/8}$, the logarithmic transform $\log(1+U)$ damps large-count
outliers because its derivative decays as $\mathcal{O}(\lambda^{-1})$ rather than
$\mathcal{O}(\lambda^{-1/2})$.

\section{The Kalman filter for the local-level model}\label{sec:kalman}

Consider the state-space system:
\begin{equation}
\mu_t = \mu_{t-1} + \eta_t, \qquad \eta_t \stackrel{\mathrm{iid}}{\sim} \mathcal{N}(0, \tau^2),
\end{equation}
\begin{equation}
I_t = \mu_t + \varepsilon_t, \qquad \varepsilon_t \stackrel{\mathrm{iid}}{\sim} \mathcal{N}(0, \sigma^2),
\end{equation}
with prior $\mu_0 \sim \mathcal{N}(m_0, P_0)$, and mutually independent noise sequences.

\begin{lemma}[Gaussian conditional distributions]\label{lem:gauss-cond}
Let $(X, Y)^\top$ be jointly Gaussian with $\Sigma_{YY}$ invertible:
\begin{equation}
\begin{pmatrix} X \\ Y \end{pmatrix} \sim \mathcal{N}\left(
\begin{pmatrix} \mu_X \\ \mu_Y \end{pmatrix},
\begin{pmatrix} \Sigma_{XX} & \Sigma_{XY} \\ \Sigma_{YX} & \Sigma_{YY} \end{pmatrix}
\right).
\end{equation}
Then the conditional distribution is:
\begin{equation}
X \mid Y \sim \mathcal{N}\big(\mu_X + \Sigma_{XY}\Sigma_{YY}^{-1}(Y - \mu_Y), \,
\Sigma_{XX} - \Sigma_{XY}\Sigma_{YY}^{-1}\Sigma_{YX}\big).
\end{equation}
\end{lemma}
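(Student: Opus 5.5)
The plan is the standard decorrelation (residual) argument. Define $A \coloneqq \Sigma_{XY}\Sigma_{YY}^{-1}$ and the residual
\begin{equation*}
Z \coloneqq X - \mu_X - A(Y - \mu_Y).
\end{equation*}
The vector $(Z, Y)^\top$ is an affine image of the jointly Gaussian vector $(X, Y)^\top$, so it is itself jointly Gaussian. This is the only structural fact about Gaussians we rely on, and it follows at once from the characteristic function (or the definition via linear combinations).

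The first step is to compute the cross-covariance:
\begin{equation*}
\operatorname{Cov}(Z, Y) = \Sigma_{XY} - A\Sigma_{YY} = 0.
\end{equation*}
For jointly Gaussian vectors, zero cross-covariance implies independence. To see this, note that the joint characteristic function has a block-diagonal quadratic form, so it factorizes. Hence $Z \perp Y$.

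The second step is to write
\begin{equation*}
X = \mu_X + A(Y - \mu_Y) + Z.
\end{equation*}
Conditional on $Y$, the first two terms are constants. Because $Z$ is independent of $Y$, its conditional law equals its marginal law. That marginal is Gaussian with mean $\mathbb{E}Z = 0$. Its covariance expands as
\begin{equation*}
\operatorname{Var}(Z) = \Sigma_{XX} - A\Sigma_{YX} - \Sigma_{XY}A^\top + A\Sigma_{YY}A^\top .
\end{equation*}
Using the symmetry of $\Sigma_{YY}$ and $\Sigma_{YX} = \Sigma_{XY}^\top$, this collapses to $\Sigma_{XX} - \Sigma_{XY}\Sigma_{YY}^{-1}\Sigma_{YX}$. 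Therefore $X \mid Y \sim \mathcal{N}(\mu_X + A(Y - \mu_Y),\, \Sigma_{XX} - \Sigma_{XY}\Sigma_{YY}^{-1}\Sigma_{YX})$, which is the claim.

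The main subtlety is making ``conditional law'' rigorous without densities. $\Sigma_{XX}$ may be singular, so we should not divide joint by marginal densities. Instead we verify directly that for bounded measurable $h$,
\begin{equation*}
\mathbb{E}[h(X)\mid Y] = \int h\big(\mu_X + A(Y-\mu_Y) + z\big)\, \mathcal{N}(0,\operatorname{Var} Z)(dz).
\end{equation*}
This follows from independence of $Z$ and $Y$ via Fubini's theorem (the freezing lemma), and it identifies a regular conditional distribution. Invertibility of $\Sigma_{YY}$ is used only to define $A$.

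As a fallback, the completing-the-square computation of Section~\ref{sec:shrinkage} could be repeated in matrix form when the joint covariance is nonsingular. The residual argument is preferred because it avoids that assumption.
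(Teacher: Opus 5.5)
Your proposal is correct and follows essentially the same route as the paper. Both define the residual $Z = X - \mu_X - \Sigma_{XY}\Sigma_{YY}^{-1}(Y-\mu_Y)$, show $\operatorname{Cov}(Z,Y)=0$, use joint Gaussianity to get $Z \perp Y$, and read off the conditional mean and covariance by re-substituting for $X$; your explicit expansion of $\operatorname{Var}(Z)$ and your freezing-lemma argument, which identifies the full regular conditional law rather than only its first two moments, add rigor to steps the paper states without detail.
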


\begin{proof}
Define the auxiliary variable
$Z \coloneqq X - \Sigma_{XY}\Sigma_{YY}^{-1}(Y - \mu_Y) - \mu_X$. Because $Z$ is an
affine combination of jointly Gaussian variables, $(Z, Y)^\top$ is jointly Gaussian.
The cross-covariance is:
\begin{equation}
\operatorname{Cov}(Z, Y)
= \operatorname{Cov}(X, Y) - \Sigma_{XY}\Sigma_{YY}^{-1}\operatorname{Var}(Y)
= \Sigma_{XY} - \Sigma_{XY}\Sigma_{YY}^{-1}\Sigma_{YY} = 0.
\end{equation}
For jointly Gaussian vectors, uncorrelatedness implies independence: $Z \perp Y$.
Therefore:
\begin{equation}
\mathbb{E}[Z \mid Y] = \mathbb{E}[Z] = 0,
\end{equation}
\begin{equation}
\operatorname{Var}(Z \mid Y) = \operatorname{Var}(Z)
= \Sigma_{XX} - \Sigma_{XY}\Sigma_{YY}^{-1}\Sigma_{YX}.
\end{equation}
Re-substituting $X = Z + \mu_X + \Sigma_{XY}\Sigma_{YY}^{-1}(Y - \mu_Y)$ gives the
stated conditional moments.
\end{proof}

\proofstep{Exact MMSE optimality via induction.}
Any linear transformation of Gaussian primitives remains Gaussian, so the joint
vector $(\mu_t,I_{1:t})$ is Gaussian and every conditional expectation
$\mathbb{E}[\mu_t \mid I_{1:t}]$ is linear in the observations. Because the
conditional expectation $\mathbb{E}[X \mid Y]$ uniquely minimizes the mean squared
error over \emph{all} measurable functions (linear or non-linear), the linear
Kalman recursion achieves exact Bayes MMSE optimality.

\proofstep{Inductive recursion.}
\begin{itemize}
\item \textbf{Base case ($t=0$):}
\begin{equation}
\hat{\mu}_{0\mid 0} = m_0, \qquad P_{0\mid 0} = P_0.
\end{equation}
\item \textbf{Time update (predict):}
Assume $\mu_{t-1} \mid I_{1:t-1} \sim \mathcal{N}(\hat{\mu}_{t-1\mid t-1}, P_{t-1\mid t-1})$.
Since $\eta_t \perp I_{1:t-1}$:
\begin{equation}
\hat{\mu}_{t\mid t-1} = \mathbb{E}[\mu_{t-1} + \eta_t \mid I_{1:t-1}]
= \hat{\mu}_{t-1\mid t-1},
\end{equation}
\begin{equation}
P_{t\mid t-1} = \operatorname{Var}(\mu_{t-1} + \eta_t \mid I_{1:t-1})
= P_{t-1\mid t-1} + \tau^2.
\end{equation}
\item \textbf{Measurement update (correct):}
Conditional on $I_{1:t-1}$, the vector $(\mu_t, I_t)^\top$ is jointly Gaussian:
\begin{equation}
\operatorname{Cov}(\mu_t, I_t \mid I_{1:t-1})
= \operatorname{Var}(\mu_t \mid I_{1:t-1}) = P_{t\mid t-1},
\end{equation}
\begin{equation}
\operatorname{Var}(I_t \mid I_{1:t-1}) = P_{t\mid t-1} + \sigma^2 > 0.
\end{equation}
Applying Lemma~\ref{lem:gauss-cond}:
\begin{equation}
K_t \coloneqq \frac{P_{t\mid t-1}}{P_{t\mid t-1} + \sigma^2},
\end{equation}
\begin{equation}
\hat{\mu}_{t\mid t} = \hat{\mu}_{t\mid t-1} + K_t\big(I_t - \hat{\mu}_{t\mid t-1}\big),
\end{equation}
\begin{equation}
P_{t\mid t} = P_{t\mid t-1} - K_t P_{t\mid t-1}
= \frac{P_{t\mid t-1}\sigma^2}{P_{t\mid t-1} + \sigma^2}.
\end{equation}
\end{itemize}
This completes the induction:
$\mu_t \mid I_{1:t} \sim \mathcal{N}(\hat{\mu}_{t\mid t}, P_{t\mid t})$ is the exact
Bayes MMSE estimator.

\input{google-play-sentiment-index}

\backmatter

\backmatterhead{Supplementary information}
Not applicable.

\backmatterhead{Acknowledgements}
Not applicable.

\section*{Statements and Declarations}

\begin{itemize}
\item \textbf{Funding:} No funding was received to assist with the preparation of this manuscript.
\item \textbf{Competing interests:} The authors have no competing interests to declare that are relevant to the content of this article.
\item \textbf{Data availability:} No new dataset was generated. The worked example uses three review records reproduced in the public \texttt{google-play-scraper} documentation.
\item \textbf{Ethics approval and consent to participate:} Not applicable.
\item \textbf{Consent for publication:} Not applicable.
\item \textbf{Materials availability:} Not applicable.
\item \textbf{Code availability:} Not applicable.
\item \textbf{Author contributions:} The author developed the index specification, prepared the proofs, and wrote the manuscript.
\end{itemize}

\bibliography{sn-bibliography}

\end{document}

%% file: google-play-sentiment-index.tex
\section{A Google Play user-review sentiment index}
\label{sec:sentiment-index}

This section converts the preceding results into an estimable index using the
schema exposed by the \texttt{google-play-scraper} Python package
\cite{googlePlayScraper}. The purpose is not to replace a trained sentiment
model, but to specify how text scores, star ratings, review metadata, and
app-level rating distributions should be combined and interpreted. The design
also separates quantities observed through the scraper from assumptions that
must be estimated or chosen by the analyst.

\subsection{Observable fields and their statistical roles}

The app-detail call exposes the app identifier, aggregate score, numbers of
ratings and reviews, the five-star histogram, current version, and update time.
The review call adds the review text, star score, helpfulness count, timestamp,
reviewed app version, and any developer reply. Table~\ref{tab:review-fields}
maps these fields to their roles in the proposed index.

\begin{table*}[htbp]
\caption{Google Play fields used in the sentiment-index pipeline.}
\label{tab:review-fields}
\centering
\footnotesize
\begin{tabular}{>{\raggedright\arraybackslash}p{0.23\textwidth}
                >{\raggedright\arraybackslash}p{0.25\textwidth}
                >{\raggedright\arraybackslash}p{0.41\textwidth}}
\hline
\textbf{Field} & \textbf{Statistical role} & \textbf{Use and limitation} \\
\hline
\texttt{content} & Text-sentiment signal & Produces a polarity score and model uncertainty; performance depends on language, domain, and handling of negation or sarcasm. \\
\texttt{score} & Ordinal rating signal & Maps a one-to-five-star response to a common $[-1,1]$ scale; it need not agree with the written review. \\
\texttt{thumbsUpCount} & Social endorsement & May modestly affect precision or prominence, but lacks exposure and down-vote denominators and is therefore not a Wilson proportion. \\
\texttt{at} & Review time & Supports recency weighting and time-window aggregation. \\
\texttt{reviewCreatedVersion}, \texttt{appVersion} & Release context & Supports version-level comparisons and change-point analysis. \\
\texttt{replyContent}, \texttt{repliedAt} & Developer response & Provides optional responsiveness covariates; a reply is not itself evidence that the complaint was resolved. \\
\texttt{histogram} & App-level rating distribution & Supplies a benchmark for diagnosing star-rating selection bias in the scraped review sample. \\
\hline
\end{tabular}
\end{table*}

The review endpoint paginates results and permits alternative sort orders. Thus,
a call returning the newest or most relevant reviews is a selected collection,
not automatically a simple random sample. Retrieving all exposed reviews reduces
truncation but can require many requests and still does not correct platform-level
selection: only users who chose to rate or review the app are observed.

\subsection{Review-level latent valence}

Let $r_i\in\{1,2,3,4,5\}$ denote review $i$'s star rating. Map it to

\begin{equation}
\rho_i = \frac{r_i-3}{2}\in[-1,1].
\end{equation}

A lexicon method such as VADER or a validated statistical classifier produces a
text score $s_i\in[-1,1]$ \cite{hutto2014vader}. Model-based class probabilities
may be converted to a continuous score by taking the expected numerical class
value. In either case, calibration should be checked on hand-labeled reviews from
the target language and app domain.

Treat $\rho_i$ and $s_i$ as two noisy measures of a latent review valence
$\theta_i$. Define $\bm{y}_i=(\rho_i,s_i)^\top$ and suppose

\begin{equation}
\bm{y}_i = \theta_i\bm{1}+\bm{e}_i,
\qquad \mathbb{E}(\bm{e}_i)=\bm{0},
\qquad \operatorname{Var}(\bm{e}_i)=\bm{\Sigma}_i.
\end{equation}

The covariance-aware best linear unbiased estimate is

\begin{equation}
\hat{\theta}_i =
\frac{\bm{1}^{\top}\bm{\Sigma}_i^{-1}\bm{y}_i}
     {\bm{1}^{\top}\bm{\Sigma}_i^{-1}\bm{1}},
\qquad
\nu_i \coloneqq \operatorname{Var}(\hat{\theta}_i)
= \frac{1}{\bm{1}^{\top}\bm{\Sigma}_i^{-1}\bm{1}}.
\label{eq:review-fusion}
\end{equation}

When the two errors are uncorrelated, equation~\eqref{eq:review-fusion} reduces
to the inverse-variance estimator proved in Section~\ref{sec:blue}. The covariance
matrix should be estimated from a labeled validation set; star ratings and text
scores should not be assumed independent merely for convenience. If such data
are unavailable, a prespecified convex combination is more transparent than a
spurious precision estimate.

The disagreement statistic

\begin{equation}
\delta_i = |\rho_i-s_i|
\end{equation}

is retained as a diagnostic. A large value can reveal rating--text mismatch,
sarcasm, accidental star selection, or classifier failure. It is a flag for
inspection, not proof that a review is deceptive.

\subsection{Helpfulness, recency, and window aggregation}

Raw helpfulness counts are highly exposure-dependent: older or prominently
displayed reviews have had more opportunities to receive votes. A Wilson lower
bound requires both successes and trials \cite{wilson1927};
\texttt{thumbsUpCount} supplies neither total impressions nor negative votes.
Consequently, helpfulness should have bounded influence. One practical weight is

\begin{equation}
w_i = q_i d_i,
\qquad
q_i = 1+\alpha\min\{\log(1+u_i),q_{\max}\},
\qquad
d_i = \exp\{-\lambda(T-t_i)\},
\label{eq:review-weight}
\end{equation}

where $u_i$ is the helpfulness count, $T-t_i$ is review age in a fixed time unit,
$\alpha\ge0$ controls social-endorsement influence, and $\lambda\ge0$ controls
decay. The leading 1 prevents reviews with zero helpfulness votes from receiving
zero weight. The logarithm dampens the heavy right tail; Section~\ref{sec:delta}
explains why its marginal effect declines rapidly.

For reviews in time or version window $t$, define

\begin{equation}
I_t = \frac{\sum_{i\in\mathcal{W}_t}w_i\hat{\theta}_i}
           {\sum_{i\in\mathcal{W}_t}w_i}.
\label{eq:raw-sentiment-index}
\end{equation}

If review-level errors are conditionally independent, a working variance that
combines signal-fusion uncertainty with observed between-review heterogeneity is

\begin{equation}
S_t^2 =
\frac{\sum_{i\in\mathcal{W}_t}w_i^2
\left\{\nu_i+(\hat{\theta}_i-I_t)^2\right\}}
     {\left(\sum_{i\in\mathcal{W}_t}w_i\right)^2}.
\label{eq:index-variance}
\end{equation}

For very small windows, this variance is unstable and should be regularized or
estimated by a prespecified bootstrap. A cluster-robust or campaign-level
variance should replace equation~\eqref{eq:index-variance} when reviews are
dependent. The effective sample size

\begin{equation}
n_{\mathrm{eff},t}=
\frac{\left(\sum_iw_i\right)^2}{\sum_iw_i^2}
\end{equation}

should accompany the nominal review count because a few highly weighted reviews
can otherwise create false confidence.

\subsection{Precision-based shrinkage and temporal smoothing}

Let $P_t=1/S_t^2$ be the estimated data precision, let $C$ be a global or
category-level reference mean, and let $m$ be its prior precision. The shrunk
window index is

\begin{equation}
I_t^{\mathrm{B}} = \frac{P_t I_t+mC}{P_t+m}.
\label{eq:shrunk-sentiment-index}
\end{equation}

This is the Gaussian posterior mean proved in Section~\ref{sec:shrinkage}.
Importantly, $P_t$ is estimated precision, not simply the number of reviews or
the sum of arbitrary weights. The prior parameters should be estimated from
historical apps or declared in advance, and sensitivity to $m$ should be reported.

For ordered windows, let the latent app sentiment follow the local-level model

\begin{equation}
I_t=\mu_t+\varepsilon_t,
\qquad \varepsilon_t\sim\mathcal{N}(0,S_t^2),
\qquad
\mu_t=\mu_{t-1}+\eta_t,
\qquad \eta_t\sim\mathcal{N}(0,\tau^2).
\end{equation}

The Kalman recursion in Section~\ref{sec:kalman} then yields a denoised estimate
of $\mu_t$. The raw index $I_t$, rather than the already shrunk
$I_t^{\mathrm{B}}$, enters the observation equation so that prior information is
not counted twice. Static shrinkage is useful when windows are analyzed
separately; in a time-series analysis, $C$ and $m$ can instead inform the initial
state. Version indicators may be examined as interventions, but a change at a
release boundary is associational unless competing time trends and changes in
the reviewing population are addressed.

\subsection{Distributional validation against the app histogram}

Let $p_k$ be the proportion of all app ratings in star category $k$ according to
the app-level histogram, and let $\hat p_k$ be the corresponding proportion in
the scraped review sample. A useful discrepancy measure is

\begin{equation}
D_{\star}=\max_{1\le k\le5}
\left|\sum_{j=1}^{k}\hat p_j-\sum_{j=1}^{k}p_j\right|.
\label{eq:discrete-cdf-gap}
\end{equation}

Although equation~\eqref{eq:discrete-cdf-gap} has the form of a
Kolmogorov--Smirnov distance, star ratings are discrete. Therefore the standard
continuous-distribution critical values associated with
Section~\ref{sec:empirics} are not valid. Under a defensible random-sampling
assumption, uncertainty may instead be calibrated by a multinomial parametric
bootstrap or a Pearson goodness-of-fit statistic. With API-ranked samples, the
more honest use of $D_{\star}$ is descriptive: compare sort orders, collection
dates, and version windows, and report how far the sample rating distribution is
from the platform histogram.

When every star category is represented, post-stratification can align the
sample with the app-level distribution by multiplying review weights in category
$k$ by $p_k/\hat p_k$. This corrects the observed star mix but cannot remove
selection bias within a star category or recover a category absent from the
sample.

\subsection{Worked interpretation}

Consider the three supplied Penguin Isle reviews. All have five stars, so
$\rho_i=1$, but their texts range from enthusiastic praise to a report that the
app stopped opening. For illustration, suppose a calibrated fusion model assigns
weight 0.625 to the star signal and 0.375 to the text signal. The text scores in
Table~\ref{tab:worked-sentiment} are illustrative inputs rather than newly fitted
classifier outputs. Recency is held constant, and the displayed weight is
$\log(1+u_i)$ solely to reproduce the compact example; an operational system
should use the positive, capped weight in equation~\eqref{eq:review-weight}.

\begin{table*}[htbp]
\caption{Illustrative fusion of three reviews.}
\label{tab:worked-sentiment}
\centering
\footnotesize
\begin{tabular}{>{\raggedright\arraybackslash}p{0.20\textwidth}
                r r r r r}
\hline
\textbf{Reviewer} & \textbf{Stars} & \textbf{Helpful} & $s_i$ & $\hat\theta_i$ & $\log(1+u_i)$ \\
\hline
LiviaBrannock & 5 & 239 & 0.30 & 0.737 & 5.48 \\
MorganEllis & 5 & 12 & $-0.50$ & 0.438 & 2.57 \\
AGoogleuser & 5 & 116 & 0.90 & 0.963 & 4.76 \\
\hline
\end{tabular}
\end{table*}

The weighted raw index is

\begin{equation}
I=\frac{5.48(0.737)+2.57(0.438)+4.76(0.963)}
        {5.48+2.57+4.76}=0.761.
\end{equation}

The naive star-only index is 1.000, so text fusion exposes a complaint pattern
hidden by the nominally perfect ratings. MorganEllis has the largest disagreement,
$|1-(-0.50)|=1.50$, and its fused valence falls to 0.438. The result should still
not be described as precise: there are only three reviews and the effective
sample size is below three.

To illustrate shrinkage, take $C=0.40$, $m=50$, and a deliberately low assumed
data precision $P=12.8$. This precision is an illustrative input, not a quantity
inferred from the review count or the displayed weights.
Equation~\eqref{eq:shrunk-sentiment-index} gives

\begin{equation}
I^{\mathrm{B}}=\frac{12.8(0.761)+50(0.40)}{12.8+50}=0.474.
\end{equation}

The value may be described as moderately positive under a prespecified band such
as $0.2$ to $0.6$, but the band is a reporting convention rather than a universal
statistical threshold. The defensible interpretation is: the observed reviews
lean positive, one review contains a substantial star--text mismatch, and the
small sample does not justify the perfect sentiment implied by the star average.
An empirical analysis should report $I_t$, $I_t^{\mathrm{B}}$, an uncertainty
interval, nominal and effective sample sizes, the rating-distribution discrepancy,
the collection parameters (language, country, sort order, and date), and the app
version represented by each window.

%% file: sn-bibliography.bib
@book{billingsley1999,
  author = {Billingsley, Patrick},
  title = {Convergence of Probability Measures},
  edition = {2},
  publisher = {Wiley},
  address = {New York},
  year = {1999}
}

@book{billingsley1968,
  author = {Billingsley, Patrick},
  title = {Convergence of Probability Measures},
  publisher = {Wiley},
  address = {New York},
  year = {1968}
}

@book{shorackwellner1986,
  author = {Shorack, Galen R. and Wellner, Jon A.},
  title = {Empirical Processes with Applications to Statistics},
  publisher = {Wiley},
  address = {New York},
  year = {1986}
}

@book{dudley2014,
  author = {Dudley, R. M.},
  title = {Uniform Central Limit Theorems},
  edition = {2},
  publisher = {Cambridge University Press},
  address = {Cambridge},
  year = {2014}
}

@book{vandervaart1998,
  author = {van der Vaart, A. W.},
  title = {Asymptotic Statistics},
  publisher = {Cambridge University Press},
  address = {Cambridge},
  year = {1998}
}

@article{aitken1936,
  author = {Aitken, A. C.},
  title = {On least squares and linear combination of observations},
  journal = {Proceedings of the Royal Society of Edinburgh},
  volume = {55},
  pages = {42--48},
  year = {1936}
}

@article{stein1956,
  author = {Stein, Charles},
  title = {Inadmissibility of the usual estimator for the mean of a multivariate normal distribution},
  journal = {Proceedings of the Third Berkeley Symposium on Mathematical Statistics and Probability},
  volume = {1},
  pages = {197--206},
  year = {1956}
}

@article{james1961,
  author = {James, W. and Stein, Charles},
  title = {Estimation with quadratic loss},
  journal = {Proceedings of the Fourth Berkeley Symposium on Mathematical Statistics and Probability},
  volume = {1},
  pages = {361--379},
  year = {1961}
}

@article{kalman1960,
  author = {Kalman, R. E.},
  title = {A new approach to linear filtering and prediction problems},
  journal = {Journal of Basic Engineering},
  volume = {82},
  number = {1},
  pages = {35--45},
  year = {1960}
}

@article{pang2008opinion,
  author = {Pang, Bo and Lee, Lillian},
  title = {Opinion mining and sentiment analysis},
  journal = {Foundations and Trends in Information Retrieval},
  volume = {2},
  number = {1--2},
  pages = {1--135},
  year = {2008},
  doi = {10.1561/1500000011}
}

@book{liu2012sentiment,
  author = {Liu, Bing},
  title = {Sentiment Analysis and Opinion Mining},
  series = {Synthesis Lectures on Human Language Technologies},
  volume = {5},
  number = {1},
  pages = {1--167},
  publisher = {Morgan \& Claypool},
  year = {2012},
  doi = {10.2200/S00416ED1V01Y201204HLT016}
}

@inproceedings{hutto2014vader,
  author = {Hutto, C. J. and Gilbert, Eric},
  title = {{VADER}: A parsimonious rule-based model for sentiment analysis of social media text},
  booktitle = {Proceedings of the Eighth International AAAI Conference on Weblogs and Social Media},
  volume = {8},
  number = {1},
  pages = {216--225},
  year = {2014},
  doi = {10.1609/icwsm.v8i1.14550}
}

@article{wilson1927,
  author = {Wilson, Edwin B.},
  title = {Probable inference, the law of succession, and statistical inference},
  journal = {Journal of the American Statistical Association},
  volume = {22},
  number = {158},
  pages = {209--212},
  year = {1927},
  doi = {10.1080/01621459.1927.10502953}
}

@misc{googlePlayScraper,
  author = {{JoMingyu}},
  title = {google-play-scraper: Google Play scraper for Python},
  howpublished = {GitHub repository and package documentation},
  url = {https://github.com/JoMingyu/google-play-scraper},
  note = {Accessed 25 September 2026}
}
